\theoremstyle{plain}
\newtheorem{lem}{\protect\lemmaname}
\theoremstyle{plain}
\newtheorem{thm}{\protect\theoremname}
\providecommand{\lemmaname}{Lemma}
\providecommand{\theoremname}{Theorem}
\begin{document}

\title{Unambiguous discrimination of sequences of quantum states}

\author{Tathagata Gupta}

\affiliation{Physics and Applied Mathematics Unit, Indian Statistical Institute,
203 B. T. Road, Kolkata 700108, India}
\email{tathagatagupta@gmail.com}

\selectlanguage{english}%

\author{Shayeef Murshid}

\affiliation{Electronics and Communication Sciences Unit, Indian Statistical Institute,
203 B. T. Road, Kolkata 700108, India}
\email{shayeef.murshid91@gmail.com}

\selectlanguage{english}%

\author{Somshubhro Bandyopadhyay}

\affiliation{Department of Physical Sciences, Bose Institute, EN 80, Bidhannagar,
Kolkata 700091, India}
\email{som@jcbose.ac.in}

\selectlanguage{english}%
\begin{abstract}
We consider the problem of determining the state of an unknown quantum
sequence without error. The elements of the given sequence are drawn
with equal probability from a known set of linearly independent pure
quantum states with the property that their mutual inner products
are all real and equal. This problem can be posed as an instance of
unambiguous state discrimination where the states correspond to that
of all possible sequences having the same length as the given one.
We calculate the optimum probability by solving the optimality conditions
of a semidefinite program. The optimum value is achievable by measuring
individual members of the sequence, and no collective measurement
is necessary. 
\end{abstract}
\maketitle

\section{Introduction}

One of the remarkable features of quantum theory that shows a radical
departure from classical physics is that distinct quantum states may
not be reliably distinguished from one another. In particular, if
a quantum system is prepared in one of two nonorthogonal states $\left|\psi_{1}\right\rangle $
and $\left|\psi_{2}\right\rangle $, then no quantum measurement could
determine the state of the system with certainty. In other words,
quantum theory allows us to distinguish only between orthogonal states. 

Even though nonorthogonal states cannot be reliably distinguished,
one may still try to glean as much ``which state'' information as
possible. Consider a quantum system prepared in one of several nonorthogonal
states $\left|\psi_{1}\right\rangle ,$$\left|\psi_{2}\right\rangle ,\dots,\left|\psi_{N}\right\rangle $,
but we do not know which one. The objective is to determine, as well
as possible, the state of the system by performing a suitable measurement.
This problem is known as quantum state discrimination (see \citep{Chefles-review-2000,BC-review-2009,Bae-Kwek-review-2015}
for excellent reviews). 

Two approaches are usually considered to study a state discrimination
problem. The first is known as minimum-error discrimination, which
aims to design a measurement that minimizes the average error and
applies to any set of nonorthogonal states. For two states $\left|\psi_{1}\right\rangle $
and $\left|\psi_{2}\right\rangle $ with prior probabilities $p_{1}$
and $p_{2}$, the maximum probability of success is given by $1-p_{e}$,
where 
\begin{alignat}{1}
p_{e} & =\frac{1}{2}\left(1-\sqrt{1-4p_{1}p_{2}\left|\left\langle \psi_{1}\vert\psi_{2}\right\rangle \right|^{2}}\right)\label{error-probability}
\end{alignat}
is the minimum probability of error \citep{Helstrom01975}. 

The second strategy is called unambiguous discrimination, which seeks
definite knowledge of the state balanced against a probability of
failure. Here a measurement outcome either correctly identifies the
given state or is inconclusive, in which case, we do not learn anything
about the state. Once again, for the two-state problem, the maximum
probability of success is given by $1-p_{I}$, where 
\begin{alignat}{1}
p_{I} & =2\sqrt{p_{1}p_{2}}\left|\left\langle \psi_{1}\vert\psi_{2}\right\rangle \right|\label{inconclusive probability}
\end{alignat}
is the minimum probability for an inconclusive result \citep{Ivan-87,Dieks-88,Peres-88,Jaeger-Shimony-95}.
Unlike minimum-error discrimination, which applies to any set of states,
unambiguous discrimination is possible if and only if the given states
are linearly independent \citep{Chefles-98}. Finding optimal solutions,
however, is considerably hard in general (for different approaches
and solutions for specific cases see \citep{Peres-Terno-98,Chefles-Barnett-PLA-98,Jafari+2008,Pang-Wu-2009,Bergou+-2012,Bandyo-14,Sun+-2001,Chefles+2004,Sugimoto+-2010,Eldar-2003,Roa+-2011}). 

In this paper we consider a variant of the state discrimination problem,
namely, sequence discrimination, where, instead of learning about
the state of a single quantum system as in state discrimination, we
wish to do the same about the state of a sequence of quantum systems.
We note that a closely related problem, viz., quantum state comparison,
where the objective is to determine if the members of a given sequence
are all identical or all different, has been studied before \citep{Chefles+2004}.

Sequence discrimination can be described as follows. Suppose that
we are given a sequence of pure quantum states, where each member
(of the given sequence) belongs to a known set of states (this set
will sometimes be referred to as the parent). We do not know the identity
of the individual members but have complete information about the
parent set. The objective is to learn about the given sequence as
well as allowed by quantum theory. As we will explain, for a given
sequence of finite length, this amounts to discriminating between
all sequences of the same length constructed from the parent set.
Since every sequence, by construction, is in a product state, sequence
discrimination is an instance of state discrimination where the concerned
states are all product states. 

As in a state discrimination problem, one could consider either the
minimum-error or unambiguous discrimination strategy for sequence
discrimination. Here we focus on the latter as we wish to identify
the state of a given sequence without error. This would be possible,
as we know from \citep{Chefles-98}, if and only if the states corresponding
to all sequences of the same length form a linearly independent set,
a condition fulfilled if and only if the parent set is linearly independent
\citep{Chefles+2004}. Therefore, one could determine the state of
any given sequence with a nonzero probability, provided the parent
set is linearly independent, and assuming this is indeed the case,
one would then like to know the optimum success probability and the
corresponding measurement. The present paper is about answering these
two questions. 

To solve this problem with full generality, besides the requirement
that the parent set must consist of linearly independent states, one
would set the associated prior probabilities to be different and the
mutual inner products to be unequal and complex. However, in this
work we do not attempt to solve the most general scenario; instead,
we assume that the elements of the parent set are all equally probable
and the inner products are all real and equal and solve the sequence
discrimination problem completely. The optimum success probability
is computed by solving the optimality conditions of a semidefinite
program. We find that the optimum value is achievable by measuring
individual members of the sequence and no collective measurement is
necessary, even though the states under consideration are all product
states. 

One may have noticed that our problem is motivated by the working
of quantum key distribution protocols, especially B92 \citep{B92}
and its generalizations. Recall that in the B92 protocol \citep{B92},
Alice sends a sequence of quantum systems, where each system is prepared
in one of two nonorthogonal pure states, to Bob, who performs an unambiguous
discrimination measurement on each of them to determine its state.
In order to generate the secret key, the conclusive outcomes are kept
and the inconclusive outcomes are discarded. A generalization of this
protocol involves Alice sending a sequence of quantum states, where
each state is chosen from a linearly independent set, so unambiguous
discrimination is possible. Note that the protocol requires Bob to
measure the quantum systems individually as and when he receives them.
However, one could ask whether Bob could do better by performing a
joint measurement on the whole sequence (assuming Bob has access to
quantum memory). Of course, this comes with the drawback that if the
outcome is inconclusive, they will need to discard the entire sequence,
but the possible upshot is that a joint measurement could increase
the probability of identifying the sequence correctly. However, our
result shows that Bob gains no advantage by choosing a collective
measurement over measuring the systems individually. 

The paper is organized as follows. Section \ref{II} discusses the
sequence discrimination problem in detail, presents the necessary
lemmas related to the properties of a collection of pure states with
mutual inner products all being real and equal, and states the main
result as Theorem \ref{main-result}. In Sec. \ref{III} we discuss
the semidefinite programming (SDP) formulation of unambiguous state
discrimination. Section \ref{IV} proves the main result by solving
the optimality conditions of the relevant SDP. This section is divided
into four subsections for easy reading and understanding of the proof.
We conclude the paper with a brief review of the results and a discussion
of the open problems in Sec. \ref{V}. 

\section{Sequence discrimination: Formulation and main result\label{II} }

We begin by describing the general formulation.\textcolor{brown}{{}
}Consider an unknown sequence of $k\in\mathbb{N}$ quantum systems,
each prepared in a state chosen from a known parent set $\left\{ p_{i},\left|\psi_{i}\right\rangle :2\leqslant i\leqslant N\right\} $,
where $p_{i}$ is the prior probability associated with $\left|\psi_{i}\right\rangle $.
The objective is to determine the sequence (more precisely, the state
of the sequence) as well as possible. This can be posed as a state
discrimination problem. 

Let $\left[n\right]=\left\{ 1,2,\dots,n:n\in\mathbb{N}\right\} $
denote the set of natural numbers from $1$ to $n$ and $\mathscr{F}\left(k,N\right)$
be the set of all functions from $\left[k\right]$ to $\left[N\right]$.
Then the state of a sequence is a product state of the form
\begin{alignat*}{1}
\left|\psi_{\sigma}\right\rangle  & =\left|\psi_{\sigma\left(1\right)}\right\rangle \otimes\cdots\otimes\left|\psi_{\sigma\left(k\right)}\right\rangle ,\hspace{1em}\sigma\in\mathscr{F}\left(k,N\right).
\end{alignat*}
To learn about a given sequence of length $k$, we therefore need
to distinguish between all such possible sequences. These sequences
form the set 
\begin{alignat}{1}
\left\{ p_{\sigma},\left|\psi_{\sigma}\right\rangle :\sigma\in\mathscr{F}\left(k,N\right)\right\}  & ,\label{S(N,k)}
\end{alignat}
where $p_{\sigma}=p_{\sigma\left(1\right)}p_{\sigma\left(2\right)}\dots p_{\sigma\left(k\right)}$
is the prior probability associated with the sequence state $\left|\psi_{\sigma}\right\rangle $.
The cardinality of the above set is $N^{k}$. The sequence discrimination
problem is therefore a state discrimination problem involving states
belonging to the set defined by \eqref{S(N,k)}\textcolor{purple}{.} 

Here we consider the problem of unambiguous sequence discrimination.
This requires $\left\{ \left|\psi_{\sigma}\right\rangle \right\} $
to be linearly independent, a condition that is satisfied (for any
$k\geqslant1$) if and only if $\left\{ \left|\psi_{i}\right\rangle \right\} $
is linearly independent \citep{Chefles+2004}. In other words, any
given sequence of unknown pure states can be correctly determined
with nonzero probability if and only if it is composed of states drawn
from a linearly independent set. 

Let us now assume that $\left\{ \left|\psi_{i}\right\rangle \right\} $
is a set of linearly independent states and further assume that they
are equally likely, i.e., $p_{i}=\frac{1}{N}$ for all $i=1,\dots,N$.
This implies that the elements of $\left\{ \left|\psi_{\sigma}\right\rangle \right\} $
are also linearly independent and equally likely with $p_{\sigma}=\frac{1}{N^{k}}$.
Since the elements of $\left\{ \left|\psi_{\sigma}\right\rangle \right\} $
are linearly independent, they can be unambiguously distinguished.
A lower bound on the optimum success probability can be easily obtained. 
\begin{lem}
\label{lower-bound} Let $p$ and $p_{N,k}$ be the respective probabilities
for unambiguous optimal discrimination among the elements of $\left\{ \left|\psi_{i}\right\rangle \right\} $
and $\left\{ \left|\psi_{\sigma}\right\rangle \right\} $. Then
\begin{flalign}
p_{N,k} & \geqslant p^{k}.\label{p(n,K)>=00003Dp^k}
\end{flalign}
\end{lem}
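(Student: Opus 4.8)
The plan is to exhibit an explicit measurement strategy for the sequence ensemble $\left\{\left|\psi_\sigma\right\rangle\right\}$ that achieves success probability $p^k$, which immediately gives the claimed lower bound since $p_{N,k}$ is the optimum. First I would take an optimal unambiguous discrimination POVM for the parent set $\left\{\left|\psi_i\right\rangle\right\}$, say $\left\{E_1,\dots,E_N,E_{\text{inc}}\right\}$ on the single-system Hilbert space, which succeeds with probability $p$; here $E_j$ is the element that unambiguously certifies $\left|\psi_j\right\rangle$ (so $\left\langle\psi_i\right|E_j\left|\psi_i\right\rangle = 0$ for $i\neq j$), and $E_{\text{inc}}$ is the inconclusive element. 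Because the states are equally likely with $p_i = 1/N$, the success probability is $p = \frac{1}{N}\sum_j \left\langle\psi_j\right|E_j\left|\psi_j\right\rangle$.

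Next I would build the product measurement on the $k$-fold tensor space: apply the single-system POVM independently to each of the $k$ members of the sequence. The composite outcomes are indexed by tuples $(j_1,\dots,j_k)$ together with possibly some coordinates being inconclusive; declare the overall outcome to be $\sigma$ precisely when every coordinate returns a conclusive answer $j_m = \sigma(m)$, and declare inconclusive otherwise. One checks this is a valid POVM on the sequence space whose conclusive elements are $F_\sigma = E_{\sigma(1)}\otimes\cdots\otimes E_{\sigma(k)}$, and that it is unambiguous: if the true state is $\left|\psi_\tau\right\rangle$ with $\tau\neq\sigma$, then $\tau$ and $\sigma$ differ in some coordinate $m$, and the factor $\left\langle\psi_{\tau(m)}\right|E_{\sigma(m)}\left|\psi_{\tau(m)}\right\rangle = 0$ forces $\left\langle\psi_\tau\right|F_\sigma\left|\psi_\tau\right\rangle = 0$, so no error is ever made.

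Finally I would compute the success probability of this strategy. Since $p_\sigma = 1/N^k$ and the measurement factorizes over the $k$ systems, the success probability is
\begin{align*}
\frac{1}{N^k}\sum_{\sigma\in\mathscr{F}(k,N)} \left\langle\psi_\sigma\right|F_\sigma\left|\psi_\sigma\right\rangle
&= \frac{1}{N^k}\sum_{\sigma} \prod_{m=1}^{k}\left\langle\psi_{\sigma(m)}\right|E_{\sigma(m)}\left|\psi_{\sigma(m)}\right\rangle \\
&= \prod_{m=1}^{k}\left(\frac{1}{N}\sum_{j=1}^{N}\left\langle\psi_j\right|E_j\left|\psi_j\right\rangle\right) = p^k,
\end{align*}
where the middle equality is the standard factorization of a sum of products over the product index set $\mathscr{F}(k,N) \cong [N]^k$. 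Since the optimal success probability $p_{N,k}$ is at least that of any particular valid unambiguous strategy, we conclude $p_{N,k}\geqslant p^k$.

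I do not expect any genuine obstacle here: the only points requiring a little care are verifying that the tensor-product construction yields a legitimate POVM (the inconclusive element is positive because it equals the identity minus a sum of positive operators, which one sees by expanding $\bigotimes_m(\sum_j E_j + E_{\text{inc}}) = I$) and checking the unambiguity condition coordinate-wise. Both are routine. The conceptual content is simply that an independent "measure-each-member" strategy is always available, and its performance is the $k$-th power of the single-system performance; the substance of the paper lies in the reverse direction — showing this bound is actually tight, i.e. that the optimum equals $p^k$ under the stated symmetry assumptions — which is what Theorem \ref{main-result} addresses.
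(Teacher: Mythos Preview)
Your proposal is correct and follows essentially the same approach as the paper: apply the optimal single-system unambiguous measurement independently to each member of the sequence and observe that this achieves average success probability $p^{k}$. Your version is more explicit (writing out the product POVM, verifying unambiguity coordinate-wise, and carrying out the factorization of the sum over $\mathscr{F}(k,N)$), whereas the paper simply asserts that measuring each member succeeds with probability $p$ and hence the sequence is identified with probability $p^{k}$.
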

\begin{proof}
First note that every member of a given sequence is an element of
$\left\{ \left|\psi_{i}\right\rangle \right\} $. Let the optimal
measurement that unambiguously distinguishes between the elements
of $\left\{ \left|\psi_{i}\right\rangle \right\} $ be $\mathbb{M}$.
Then, by performing this measurement on individual members of the
sequence, we can determine the state of each of them with probability
$p$. Thus the state of the sequence can be correctly determined with
probability $p^{k}$ (note that for the lower bound to hold the probability
distribution need not be uniform).
\end{proof}
The lower bound in \eqref{p(n,K)>=00003Dp^k} is obtained by the strategy
that unambiguously determines the state of each member of the sequence
separately. However, such a strategy could well be sub-optimal. The
reasoning goes as follows: Since $\left\{ \left|\psi_{\sigma}\right\rangle \right\} $
is a collection of linearly independent product states, to optimally
distinguish between them, a joint measurement on the whole system
may be necessary and, if so, inequality \eqref{p(n,K)>=00003Dp^k}
would be strict. Indeed, there are instances where joint measurements
are required to optimally distinguish between product states (see,
e.g., \citep{Peres-Wootters-1991,Massar-Popescu-1995,Bennett+-1999}).

The main contribution of this paper is to show that if the states
$\left|\psi_{i}\right\rangle $, in addition to being linearly independent,
have the property that their mutual inner products are all real and
equal, then equality holds in \eqref{p(n,K)>=00003Dp^k}. Therefore,
the optimum probability to unambiguously determine the state of an
unknown sequence, whose elements are drawn with equal probability
from a set of linearly independent states with real and equal inner
products, can be achieved by measuring the members of the sequence
individually.

The following results are proved in \citep{Roa+-2011}. The first
gives us the condition under which a collection of pure states, with
inner products real and equal, can be linearly independent. 
\begin{lem}
(from \citep{Roa+-2011}) Let $S_{N}=\left\{ \left|\psi_{i}\right\rangle :2\leqslant i\leqslant N\right\} $
be a set of pure states with the property $\left\langle \psi_{i}\vert\psi_{j}\right\rangle =s\in\mathbb{R}$
for $i\neq j$. The states are linearly independent if and only if
$s\in\left(-\frac{1}{N-1},1\right)$. 
\end{lem}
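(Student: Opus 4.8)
The plan is to reduce linear independence of the $\left|\psi_{i}\right\rangle$ to positive-definiteness of their Gram matrix, which under the stated symmetry can be diagonalized by inspection.

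First I would recall the standard criterion: a finite family of vectors in a Hilbert space is linearly independent if and only if its Gram matrix $G$, with entries $G_{ij}=\left\langle \psi_{i}\vert\psi_{j}\right\rangle$, is nonsingular; and since every Gram matrix is positive semidefinite, this is the same as $G$ being positive definite, i.e.\ as all eigenvalues of $G$ being strictly positive. Under the hypotheses the $\left|\psi_{i}\right\rangle$ are unit vectors and all distinct inner products equal the real number $s$, so their Gram matrix is the $N\times N$ matrix
\begin{equation*}
G=(1-s)\,I+s\,J,
\end{equation*}
with $I$ the identity and $J$ the all-ones matrix.

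Next I would read off the spectrum of $G$. The vector $(1,1,\dots,1)^{T}$ is an eigenvector of $J$ with eigenvalue $N$, hence an eigenvector of $G$ with eigenvalue $1+(N-1)s$; every vector orthogonal to $(1,\dots,1)^{T}$ lies in the kernel of $J$ and so is an eigenvector of $G$ with eigenvalue $1-s$, occurring with multiplicity $N-1$. Therefore $G\succ0$ precisely when $1-s>0$ and $1+(N-1)s>0$, i.e.\ when $-\tfrac{1}{N-1}<s<1$, and combined with the first step this is the assertion. (One could instead compute $\det G=(1-s)^{N-1}\bigl(1+(N-1)s\bigr)$, but the eigenvalue computation makes the positive-definiteness, and hence the ``if and only if,'' transparent.)

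There is no real obstacle here: the whole argument is a short exercise in linear algebra once the Gram matrix is identified. The only points needing a little care are invoking the Gram-matrix criterion rather than arguing about linear independence directly, and tracking the eigenvalue multiplicities so that the lower endpoint emerges as $-\tfrac{1}{N-1}$ rather than some other value. Both endpoints are genuinely excluded: at $s=1$ all the states coincide (equality in Cauchy--Schwarz for unit vectors), while at $s=-\tfrac{1}{N-1}$ one has $\bigl\|\sum_{i}\left|\psi_{i}\right\rangle\bigr\|^{2}=N+N(N-1)s=0$, so the single relation $\sum_{i}\left|\psi_{i}\right\rangle=0$ holds.
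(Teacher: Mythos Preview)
Your proposal is correct and follows essentially the same route as the paper: both reduce the question to the Gram matrix and its nonsingularity. The paper merely states that the proof follows by requiring the Gram determinant to be positive, whereas you diagonalize $G=(1-s)I+sJ$ explicitly and read off the eigenvalues $1-s$ and $1+(N-1)s$; since you also note the equivalent determinant computation $\det G=(1-s)^{N-1}\bigl(1+(N-1)s\bigr)$, there is no substantive difference in approach.
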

The lemma tells us that once we require the inner products to all
be real and equal to, say, $s$, then the states $\left|\psi_{i}\right\rangle $
cannot be linearly independent for all permissible values of $s$;
they are linearly independent provided $s\in\left(-\frac{1}{N-1},1\right)$.
The proof follows by requiring the Gram determinant to be greater
than $0$, which is an equivalent criterion for linear independence. 

For a set of linearly independent, equally likely pure states with
real and equal inner products, the following lemma tells us how well
they can be distinguished unambiguously. 
\begin{lem}
\label{p=00003D1-s} (from \citep{Roa+-2011}) Let $S_{N}=\left\{ \left|\psi_{i}\right\rangle :2\leqslant i\leqslant N\right\} $
be a set of equally likely, linearly independent pure states with
the property $\left\langle \psi_{i}\vert\psi_{j}\right\rangle =s$
for $i\neq j$, where $s\in\left(-\frac{1}{N-1},1\right)$. Then the
optimum probability for unambiguous discrimination among the states
$\left|\psi_{i}\right\rangle $ is
\begin{alignat*}{1}
p & =\begin{cases}
1-s, & s\in\left[0,1\right)\\
1+\left(N-1\right)s, & s\in\left(-\frac{1}{N-1},0\right].
\end{cases}
\end{alignat*}
 
\end{lem}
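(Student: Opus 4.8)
The plan is to cast unambiguous discrimination of the states $|\psi_i\rangle$ as an optimization over POVMs, use linear independence to fix the form of the optimal measurement operators, exploit the full permutation symmetry of the configuration (all pairwise inner products equal to the same $s$) to collapse the problem to a single scalar parameter, and then read the answer off the spectrum of the Gram matrix $G=(1-s)I+sJ$, with $J$ the all-ones matrix.

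First I would recall the standard reformulation: a POVM $\{\Pi_1,\dots,\Pi_N,\Pi_0\}$ discriminates the $|\psi_i\rangle$ unambiguously exactly when $\Pi_i\geqslant 0$, $\Pi_0+\sum_i\Pi_i=I$, and $\langle\psi_j|\Pi_i|\psi_j\rangle=0$ for $i\neq j$; since $\Pi_i\geqslant 0$ this last condition is equivalent to $\Pi_i|\psi_j\rangle=0$ for every $j\neq i$. Working inside $H=\mathrm{span}\{|\psi_i\rangle\}$, linear independence makes the orthogonal complement of $\{|\psi_j\rangle:j\neq i\}$ within $H$ one-dimensional, spanned by a unit vector $|q_i\rangle$, so one is forced to take $\Pi_i=a_i\,|q_i\rangle\langle q_i|$ with $a_i\geqslant 0$ (any component of $\Pi_i$ outside $H$ contributes nothing to the objective and only tightens $\sum_i\Pi_i\leqslant I$, so it may be dropped). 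Using that the Gram matrix of the dual basis $\{|\tilde\psi_i\rangle\}$, defined by $\langle\tilde\psi_i|\psi_j\rangle=\delta_{ij}$, is $G^{-1}$, one gets $|q_i\rangle=|\tilde\psi_i\rangle/\sqrt{(G^{-1})_{ii}}$ and hence $|\langle q_i|\psi_i\rangle|^2=1/(G^{-1})_{ii}$, so with uniform priors the success probability is $P=\tfrac1N\sum_i a_i/(G^{-1})_{ii}$, to be maximized subject to $\sum_i a_i\,|q_i\rangle\langle q_i|\leqslant I_H$.

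Next I would invoke symmetry. Because $\langle\psi_i|\psi_j\rangle=s$ for all $i\neq j$, every permutation $\pi$ of the indices is implemented by a unitary $U_\pi$ on $H$ with $U_\pi|\psi_i\rangle=|\psi_{\pi(i)}\rangle$, which forces $U_\pi|q_i\rangle\langle q_i|U_\pi^\dagger=|q_{\pi(i)}\rangle\langle q_{\pi(i)}|$ and makes $(G^{-1})_{ii}$ equal to a common value $g$ for all $i$. The objective and the feasible set are then invariant under $a_i\mapsto a_{\pi^{-1}(i)}$, so averaging any feasible point over all permutations (the feasible set is convex, the objective linear and permutation-symmetric) shows there is no loss in taking $a_i\equiv a$. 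Hence $P=a/g$ and the constraint becomes $a\,\|\sum_i|q_i\rangle\langle q_i|\|\leqslant 1$, so the optimum is $P=\big(g\,\|\sum_i|q_i\rangle\langle q_i|\|\big)^{-1}$.

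Finally I would compute the operator norm. The Hermitian positive operator $T:=\sum_i|\tilde\psi_i\rangle\langle\tilde\psi_i|$ on $H$ sends $|\psi_j\rangle$ to $|\tilde\psi_j\rangle=\sum_k(G^{-1})_{kj}|\psi_k\rangle$, so relative to the basis $\{|\psi_i\rangle\}$ it is represented by the matrix $G^{-1}$ and thus has the same eigenvalues; being positive, $\|T\|=\|G^{-1}\|=1/\lambda_{\min}(G)$. Since $\sum_i|q_i\rangle\langle q_i|=\tfrac1g\,T$, we get $P=\lambda_{\min}(G)$. As $G=(1-s)I+sJ$ has eigenvalues $1-s$ (multiplicity $N-1$) and $1+(N-1)s$ (simple), $\lambda_{\min}(G)=\min\{1-s,\,1+(N-1)s\}$, which equals $1-s$ for $s\in[0,1)$ and $1+(N-1)s$ for $s\in(-\tfrac1{N-1},0]$ — exactly the claimed formula. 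I expect the only step needing care to be the reduction: arguing cleanly that the no-error conditions force $\Pi_i=a_i|q_i\rangle\langle q_i|$ and that symmetrization does not lower $P$; once that is in place the problem is a one-line eigenvalue computation, so there is no genuine analytic obstacle.
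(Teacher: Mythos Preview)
Your argument is correct. The reduction to $\Pi_i=a_i|q_i\rangle\langle q_i|$ is clean (compressing each $\Pi_i$ to $H$ via $P_H\Pi_iP_H$ preserves positivity, the no-error conditions, the constraint $\sum_i\Pi_i\leqslant I$, and the objective, after which the range of $\Pi_i$ is forced into the one-dimensional space $\mathrm{span}\{|q_i\rangle\}$), the symmetrization step is valid because the objective is linear and permutation-invariant while the feasible set is convex and permutation-invariant, and the identification of the eigenvalues of $T=\sum_i|\tilde\psi_i\rangle\langle\tilde\psi_i|$ with those of $G^{-1}$ via its matrix in the basis $\{|\psi_i\rangle\}$ is legitimate since similarity preserves spectrum. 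The conclusion $p=\lambda_{\min}(G)=\min\{1-s,\,1+(N-1)s\}$ follows.

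Your route, however, is different from the one the paper invokes. The paper does not prove this lemma itself; it cites \cite{Roa+-2011}, whose method is constructive: adjoin an ancilla, apply a carefully chosen joint unitary, and measure the ancilla so that one outcome projects the system onto a subspace where the images of the $|\psi_i\rangle$ are orthogonal. Optimality there is argued separately. Your argument instead stays entirely within the POVM picture and certifies optimality by symmetry rather than by construction, yielding the compact statement $p=\lambda_{\min}(G)$ in one stroke. It is also worth noting that your formulation is exactly the primal SDP \eqref{primal} that the paper sets up for the sequence problem (the constraint $\sum_i a_i|q_i\rangle\langle q_i|\leqslant I_H$ is equivalent to $\Gamma-P\succeq 0$ with $p_i=a_i/(G^{-1})_{ii}$), but where the paper proves optimality for general $k$ by exhibiting a matching dual solution $Z$, you bypass duality at $k=1$ by using the full $S_N$-symmetry to reduce to the scalar constraint $\Gamma-pI\succeq 0$. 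The dual approach scales to the tensor-product setting of Theorem~\ref{main-result}; your symmetry shortcut is more elementary but specific to the single-copy case.
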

The proof can be found in \citep{Roa+-2011} (the result in \citep{Roa+-2011}
was more general and was proved for states having equal inner products,
real or complex). The basic idea is to attach an ancilla with the
given system (in an unknown state), apply a joint unitary transformation
on the whole system, and finally measure the ancilla in an orthogonal
basis. By choosing an appropriate unitary transformation, the measurement
on the ancilla maps the system of interest onto the unambiguous subspace
with a nonzero probability. 

We now state our main result. 
\begin{thm}
\label{main-result} Let $S_{N,k}=\left\{ \left|\psi_{\sigma}\right\rangle \equiv\left|\psi_{\sigma\left(1\right)}\right\rangle \otimes\cdots\otimes\left|\psi_{\sigma\left(k\right)}\right\rangle :\sigma\in\mathscr{F}\left(k,N\right)\right\} $
be the set of all sequences of $k$ states, where each member of a
sequence is drawn from $S_{N}$ (defined in Lemma \ref{p=00003D1-s})
with equal probability (hence, the sequences are all equiprobable).
Then the optimum probability of unambiguous discrimination between
the elements of $S_{N,k}$ is
\begin{alignat*}{1}
p_{N,k} & =\begin{cases}
\left(1-s\right)^{k}, & s\in\left[0,1\right)\\
\left[1+\left(N-1\right)s\right]^{k}, & s\in\left(-\frac{1}{N-1},0\right].
\end{cases}
\end{alignat*}
 This probability is achievable by measuring the individual systems
forming a sequence. 
\end{thm}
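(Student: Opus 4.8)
The plan is to combine the elementary lower bound $p_{N,k}\geq p^{k}$ of Lemma~\ref{lower-bound} with a matching upper bound read off from the \emph{dual} of the unambiguous-discrimination SDP (the primal being the one set up in Sec.~\ref{III}). Because $\{|\psi_{i}\rangle\}$ is a basis of its span, so is $\{|\psi_{\sigma}\rangle\}$, and hence every conclusive POVM element must be proportional to the projector onto the one-dimensional subspace that lies inside $\mathrm{span}\{|\psi_{\sigma}\rangle\}$ and is orthogonal to all $|\psi_{\tau}\rangle$ with $\tau\neq\sigma$; write $|q_{\sigma}\rangle$ for the corresponding unit (reciprocal) vector and put $a_{\sigma}=p_{\sigma}|\langle\psi_{\sigma}|q_{\sigma}\rangle|^{2}>0$. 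With this notation the primal maximum equals $p_{N,k}$, and its dual is: minimize $\operatorname{Tr}Y$ over $Y\succeq0$ subject to $\langle q_{\sigma}|Y|q_{\sigma}\rangle\geq a_{\sigma}$ for every $\sigma$. Weak duality gives $p_{N,k}\leq\operatorname{Tr}Y$ for any dual-feasible $Y$, so it is enough to exhibit one feasible $Y$ with $\operatorname{Tr}Y=p^{k}$, where $p$ is the single-system optimum of Lemma~\ref{p=00003D1-s}.

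The structural fact that makes everything factorize is that the reciprocal basis of a tensor-product basis is the tensor product of the reciprocal bases: $|q_{\sigma}\rangle=|q^{(1)}_{\sigma(1)}\rangle\otimes\cdots\otimes|q^{(1)}_{\sigma(k)}\rangle$, where $|q^{(1)}_{i}\rangle$ is the reciprocal vector of $|\psi_{i}\rangle$ inside $S_{N}$. (That product is a unit vector, and it annihilates $|\psi_{\tau}\rangle$ whenever $\tau\neq\sigma$, because then $\sigma(m)\neq\tau(m)$ for some $m$ and the $m$-th factor kills it.) Consequently $a_{\sigma}=\prod_{m=1}^{k}a^{(1)}_{\sigma(m)}$, with $a^{(1)}_{i}=p_{i}|\langle\psi_{i}|q^{(1)}_{i}\rangle|^{2}>0$ the single-system data.

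For the dual certificate I would take $Y=Y_{1}^{\otimes k}$, where $Y_{1}\succeq0$ is an optimal solution of the single-system dual. Such a $Y_{1}$ exists and satisfies $\operatorname{Tr}Y_{1}=p$: the single-system primal is strictly feasible (shrink the conclusive elements), so Slater's condition yields strong duality with the dual optimum attained, and the common value is $p$ by Lemma~\ref{p=00003D1-s}. Then $Y\succeq0$, $\operatorname{Tr}Y=(\operatorname{Tr}Y_{1})^{k}=p^{k}$, and
\[
\langle q_{\sigma}|Y|q_{\sigma}\rangle=\prod_{m=1}^{k}\langle q^{(1)}_{\sigma(m)}|Y_{1}|q^{(1)}_{\sigma(m)}\rangle\ \geq\ \prod_{m=1}^{k}a^{(1)}_{\sigma(m)}=a_{\sigma},
\]
the inequality being the product of the $k$ single-system constraints $\langle q^{(1)}_{i}|Y_{1}|q^{(1)}_{i}\rangle\geq a^{(1)}_{i}$, which is legitimate because all the quantities involved are positive. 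Hence $Y$ is dual-feasible, $p_{N,k}\leq p^{k}$, and together with Lemma~\ref{lower-bound} this forces $p_{N,k}=p^{k}$, i.e.\ $(1-s)^{k}$ for $s\in[0,1)$ and $[1+(N-1)s]^{k}$ for $s\in(-\tfrac{1}{N-1},0]$. The optimum is attained at the primal point $E_{\sigma}=\bigotimes_{m=1}^{k}E^{(1)}_{\sigma(m)}$, with $\{E^{(1)}_{i}\}$ the optimal single-system POVM --- that is, by measuring the members of the sequence one at a time --- since $\sum_{\sigma}\bigotimes_{m}E^{(1)}_{\sigma(m)}=\bigl(\sum_{i}E^{(1)}_{i}\bigr)^{\otimes k}\preceq I$ makes it a valid unambiguous measurement whose success probability is $\prod_{m}\bigl(\sum_{i}p_{i}\langle\psi_{i}|E^{(1)}_{i}|\psi_{i}\rangle\bigr)=p^{k}$.

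The real work is not in any single computation but in formulating the primal--dual pair precisely and invoking strong duality cleanly; once the dual is in hand, the collective-versus-individual question dissolves, because the single-system certificate $Y_{1}^{\otimes k}$ upper-bounds the success probability of \emph{every} measurement on the whole sequence, collective ones included. (An alternative, matching the semidefinite-programming route announced earlier, is to write the Karush--Kuhn--Tucker conditions for the $N^{k}$-outcome SDP directly and solve for the Lagrange multipliers; there the bookkeeping of the optimality conditions --- handling the inconclusive element, and the sign split at $s=0$ --- is the most laborious part.)
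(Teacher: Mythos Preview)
Your proof is correct and takes a genuinely different route from the paper's. The paper works in the Gram-matrix SDP of Sec.~\ref{III}: it computes the eigenvalues of $\Gamma(N,k)$ explicitly via the recursive block structure $\Gamma(N,l+1)=\Gamma(N,1)\otimes\Gamma(N,l)$ (Lemma~\ref{Gamma n, k+1} and Theorem~\ref{eigvalues-Gamma(N,k)}), checks that the ansatz $P(N,k)=p^{k}I$ is primal-feasible (Theorem~\ref{Gamma-P}), and then builds the dual variable $Z(N,k)$ by an explicit induction on $k$, with a case split on the sign of $s$ (Theorem~\ref{main-theorem}). You instead pass to the reciprocal-vector SDP and take the dual certificate to be the tensor power $Y_{1}^{\otimes k}$ of an optimal single-system dual solution; the factorization of the reciprocal basis makes every dual constraint a product of single-system ones, and positivity lets you multiply the inequalities. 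This is shorter and more conceptual: it replaces the eigenvalue bookkeeping and the recursive construction of $Z$ by a single structural observation. More strikingly, your argument nowhere uses the hypothesis that the inner products $\langle\psi_{i}|\psi_{j}\rangle$ are real and equal --- it only needs linear independence of the parent set and product priors --- so it actually yields $p_{N,k}=p^{k}$ in general and settles the first two open problems listed in Sec.~\ref{V}. What the paper's approach buys in return is explicitness: closed-form spectra for $\Gamma(N,k)$ and a concrete dual witness $Z(N,k)$, without having to import the existence of $Y_{1}$ from strong duality at the single-system level.
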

From Lemma \ref{p=00003D1-s} and Theorem \ref{main-result} we see
that $p_{N,k}=p^{k}$ for all $s\in\left(-\frac{1}{N-1},1\right)$.
We will prove this theorem by solving the optimality conditions of
a semidefinite program (SDP). So we proceed by formulating the unambiguous
state discrimination problem as an SDP and deriving the dual problem.

\section{SDP formulation\label{III}}

Given a set of $N$ linearly independent pure states $\left|\chi_{i}\right\rangle $
with prior probabilities $\eta_{i}$, the problem of unambiguous discrimination
can be cast as an SDP \citep{Sugimoto+-2010,Eldar-2003}. The primal
problem is
\begin{equation}
\begin{aligned} & \underset{\bm{p}}{\text{maximize}} &  & \bm{\eta\cdot p}\\
 & \text{subject to} &  & \Gamma-P\succeq0,\\
 &  &  & \bm{p}\succeq0.
\end{aligned}
\label{primal}
\end{equation}
Here $\bm{\eta}=\left(\eta_{1},\dots,\eta_{N}\right)$ and $\bm{p}=\left(p_{1},\dots,p_{N}\right)$,
where $p_{i}$ is the SDP variable representing the probability of
successfully detecting the input $\left|\chi_{i}\right\rangle $;
$\Gamma$ is the Gram matrix whose elements are $\Gamma_{ij}=\left\langle \chi_{i}\vert\chi_{j}\right\rangle $
and $P=\text{diag}\left(p_{1},\dots,p_{N}\right)$. The first constraint
says that the matrix $\Gamma-P$ should be positive semidefinite and
the second constraint is simply the positive semidefiniteness of the
probabilities $p_{i}$. 

To construct the dual SDP, we first construct the Lagrangian 
\begin{alignat*}{1}
L\left(\bm{p},Z,\bm{\bm{z}}\right) & =\bm{\eta\cdot p}+\text{tr}\left[\left(\Gamma-P\right)Z\right]+\bm{\bm{z}}\bm{\cdot}\bm{p},
\end{alignat*}
where the dual variable $Z$ is an $N\times N$ real-symmetric matrix
and $\bm{\bm{z}}$ is a real $N$-tuple. If $Z,\bm{\bm{z}}\succeq0$,
then $L\left(\bm{p},Z,\bm{\bm{z}}\right)\geqslant\bm{\eta\cdot\tilde{p}}$
for any feasible solution $\bm{\tilde{p}}$ of the primal SDP. Therefore,
the inequality must also hold for the optimum $\bm{p}$, say, $\bm{p^{*}}$,
which implies that $L\left(\bm{p^{*}},Z,\bm{\bm{z}}\right)\geqslant\bm{\eta\cdot p^{*}}$.
With this in mind, we define the Lagrange dual function
\begin{alignat*}{1}
g\left(Z,\bm{\bm{z}}\right) & =\sup_{\bm{p}}L\left(\bm{p},Z,\bm{\bm{z}}\right)
\end{alignat*}
and note that it satisfies $g\left(Z,\bm{\bm{z}}\right)\geqslant\max_{\bm{p}}\bm{\eta\cdot p}$.
The dual SDP seeks to 
\[
\begin{aligned} & \underset{Z,\bm{\bm{z}}}{\text{minimize}} &  & g\left(Z,\bm{\bm{z}}\right)\\
 & \text{subject to} &  & Z,\bm{\bm{z}}\succeq0.
\end{aligned}
\]
Consider a family of $N\times N$ matrices $\left\{ F_{i}\right\} $
for $i=1,\dots,N$, where each $F_{i}$ has exactly one nonzero element
$-1$ at position $\left(i,i\right)$. Now note that
\begin{alignat*}{1}
g\left(Z,\bm{\bm{\bm{z}}}\right) & =\sup_{\bm{p}}L\left(\bm{p},Z,\bm{\bm{\bm{z}}}\right)\\
 & =\sup_{\bm{p}}\left\{ \bm{\eta\cdot p}+\text{tr}\left[\left(\Gamma-P\right)Z\right]+\bm{z\cdot p}\right\} \\
 & =\sup_{\bm{p}}\left[\sum_{i=1}^{N}p_{i}\left[z_{i}+\eta_{i}+\text{tr}\left(F_{i}Z\right)\right]+\text{tr}\left(\Gamma Z\right)\right]\\
 & =\left\{ \begin{array}{ccc}
\text{tr}\left(\Gamma Z\right) &  & \text{if\;}z_{i}+\eta_{i}+\text{tr}\left(F_{i}Z\right)=0\,\forall i\\
\infty &  & \text{otherwise}.
\end{array}\right.
\end{alignat*}
Therefore, the dual problem becomes 
\[
\begin{aligned} & \underset{Z,\bm{\bm{\bm{z}}}}{\text{minimize}} &  & \text{tr}\left(\Gamma Z\right)\\
 & \text{subject to} &  & z_{i}+\eta_{i}+\text{tr}\left(F_{i}Z\right)=0,\\
 &  &  & Z,\bm{\bm{\bm{z}}}\succeq0.
\end{aligned}
\]

In the next section we prove the main result. 

\section{Proof of Theorem 1\label{IV}}

\emph{Proof outline}. First note that the primal problem is convex
and there exists a $\bm{p}$ (equivalently $P$) that is strictly
feasible. Under these conditions, Slater's theorem guarantees that
the strong duality holds, and the duality gap is zero. The way we
will proceed is the following. We will present an ansatz $P$ and
obtain a solution for the primal problem, which is not necessarily
optimal. Then we will present candidates for the dual variables $Z$
and $\bm{\bm{\bm{z}}}$ and show that this makes the dual value equal
to the primal one. Since strong duality holds, this implies that our
ansatz must be the optimal solution for the primal problem.

\subsection{Technical Lemmas}

First we will prove a couple of technical lemmas. 
\begin{lem}
\label{block-lemma} If $A$ is a block matrix with each block being
a diagonal matrix of the same size, then $A$ is similar to a block-diagonal
matrix. 
\end{lem}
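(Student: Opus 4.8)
The plan is to exhibit an explicit permutation similarity that ``unshuffles'' the blocks. Write $A$ as an $m\times m$ array of blocks $A_{ij}$, each a $d\times d$ diagonal matrix, say $A_{ij}=\mathrm{diag}\!\left(a^{(1)}_{ij},\dots,a^{(d)}_{ij}\right)$, so that $A$ acts on a space of dimension $md$. I would index the standard basis of this space by pairs $(i,s)$ with $i\in[m]$ the block label and $s\in[d]$ the coordinate inside a block; in this labelling the entry of $A$ in row $(i,s)$ and column $(j,t)$ is $\left(A_{ij}\right)_{st}=a^{(s)}_{ij}\,\delta_{st}$, since every block is diagonal. (Note that for ``similar'' to make sense $A$ must be a square array of blocks, which is implicit in the statement.)

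First I would introduce the permutation $\pi$ of the index set $[m]\times[d]$ that interchanges the order of the two labels, i.e.\ reorders the basis so that the coordinate index $s$ varies slowest and the block index $i$ varies fastest, schematically $(i,s)\mapsto(s,i)$; let $\Pi$ be the corresponding permutation matrix. Then the entry of $\Pi A\Pi^{-1}$ in row $(s,i)$ and column $(t,j)$ equals the entry of $A$ in row $(i,s)$ and column $(j,t)$, namely $a^{(s)}_{ij}\,\delta_{st}$.

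Finally I would read off the structure: whenever $s\neq t$ this entry vanishes, so $\Pi A\Pi^{-1}$ is block diagonal with $d$ diagonal blocks $B_1,\dots,B_d$, where $B_s=\left(a^{(s)}_{ij}\right)_{i,j=1}^{m}$ collects the $s$-th diagonal entries of all the blocks of $A$. Since $\Pi$ is invertible (indeed $\Pi^{-1}=\Pi^{\mathsf T}$), this is a genuine similarity, which proves the lemma.

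There is essentially no obstacle here beyond bookkeeping: the content is the observation that ``block matrix of equal-sized diagonal blocks'' and ``block-diagonal matrix with equal-sized blocks'' are the same operator viewed under the two natural orderings of the tensor-product index set $[m]\times[d]$, and the perfect-shuffle permutation $\Pi$ realizes the change of viewpoint. The only point requiring a little care is keeping the two index conventions straight when verifying the conjugation formula for $\Pi A\Pi^{-1}$.
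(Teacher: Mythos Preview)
Your proof is correct and is essentially the same argument as the paper's: the paper writes $A=\sum_{i,j,k}\alpha_{ijk}\,E_{ij}^{(n)}\otimes E_{kk}^{(m)}$ and invokes the vec-permutation (perfect-shuffle) identity $U\otimes V=P^{-1}(V\otimes U)P$ to obtain the block-diagonal form $\sum_{i,j,k}\alpha_{ijk}\,E_{kk}^{(m)}\otimes E_{ij}^{(n)}$, which is exactly your conjugation by $\Pi$ written in tensor notation. The only difference is presentational---you carry out the index bookkeeping explicitly, whereas the paper cites the Henderson--Searle result for the same permutation similarity.
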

\begin{proof}
Let $A=\left(\begin{array}{cccc}
A_{11} & A_{12} & \cdots & A_{1n}\\
A_{21} & A_{22} & \cdots & A_{2n}\\
\vdots & \vdots & \ddots & \vdots\\
A_{n1} & A_{n2} & \cdots & A_{nn}
\end{array}\right),$ where the $A_{ij}$'s are diagonal matrices of size $m\times m$.
Let $\alpha_{ijk}$ denote the $k$-th diagonal entry of $A_{ij}$.
Then
\begin{alignat}{1}
A & =\sum_{i,j=1}^{n}\sum_{k=1}^{m}\alpha_{ijk}E_{ij}^{\left(n\right)}\otimes E_{kk}^{\left(m\right)},\label{A}
\end{alignat}
where $E_{\mu\nu}^{\left(t\right)}$ is a $t\times t$ matrix whose
$\left(\mu,\nu\right)$-th entry is $1$ and all other entries are
$0$. Now, if $U$ and $V$ are square matrices, then $U\otimes V=P^{-1}\left(V\otimes U\right)P$
for some permutation matrix $P$ that depends only on the dimensions
of $U$ and $V$ \citep{Henderson-Searle-1981}. It then follows that
$A$ is similar to $\sum_{i,j=1}^{n}\sum_{k=1}^{m}\alpha_{ijk}E_{kk}^{\left(m\right)}\otimes E_{ij}^{\left(n\right)}$,
which has the block-diagonal form $\left(\begin{array}{cccc}
D_{1} & 0 & \cdots & 0\\
0 & D_{2} & \cdots & 0\\
\vdots & \vdots & \ddots & \vdots\\
0 & 0 & \cdots & D_{m}
\end{array}\right)$, where the $\left(i,j\right)$ entry of $D_{k}$ is $\alpha_{ijk}$. 
\end{proof}
\begin{lem}
\label{eigvalues-Lambda} Let $\Lambda$ be a real $n\times n$ matrix
of the form 
\begin{alignat}{1}
\Lambda & =\left(\begin{array}{cccc}
1 & r & \cdots & r\\
r & 1 & \cdots & r\\
\vdots & \vdots & \ddots & \vdots\\
r & r & \cdots & 1
\end{array}\right),\hspace{1em}r\in\mathbb{R}.\label{Lambda}
\end{alignat}
The distinct (except when $r=0$) eigenvalues of $\Lambda$ are $1-r$
and $1+\left(n-1\right)r$.
\end{lem}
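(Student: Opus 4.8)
The plan is to compute the eigenvalues of $\Lambda$ directly by exploiting its symmetric structure. First I would write $\Lambda = (1-r)I_n + rJ_n$, where $J_n$ is the all-ones $n\times n$ matrix. This decomposition is the key observation: $\Lambda$ is a rank-one perturbation of a scalar matrix, so its spectrum is immediately controlled by that of $J_n$.

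Next I would recall the elementary spectral properties of $J_n$: it has rank one, with the all-ones vector $\bm{1}=(1,1,\dots,1)^{T}$ as an eigenvector with eigenvalue $n$, and with the $(n-1)$-dimensional subspace $\{v : \bm{1}^{T}v = 0\}$ as its kernel, i.e. eigenvalue $0$ with multiplicity $n-1$. Since $\Lambda = (1-r)I_n + rJ_n$, each eigenvector of $J_n$ is an eigenvector of $\Lambda$, with eigenvalue shifted according to $\mu \mapsto (1-r) + r\mu$. Applying this to the two eigenvalues of $J_n$ gives eigenvalue $(1-r) + r\cdot n = 1 + (n-1)r$ on the span of $\bm{1}$, and eigenvalue $(1-r) + r\cdot 0 = 1-r$ with multiplicity $n-1$ on the orthogonal complement. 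These are distinct precisely when $r \neq 0$, which accounts for the parenthetical caveat in the statement.

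There is essentially no hard step here; the proof is a routine application of a standard fact. The only point requiring a word of care is the multiplicity claim and the degeneration at $r=0$: when $r=0$ one simply has $\Lambda = I_n$, whose only eigenvalue is $1$, and the two expressions $1-r$ and $1+(n-1)r$ coincide. I would state the decomposition, invoke the spectrum of $J_n$, and conclude, noting the multiplicities $n-1$ and $1$ explicitly since they will presumably be used later in the proof of Theorem \ref{main-result}.
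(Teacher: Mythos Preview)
Your proof is correct but takes a different route from the paper. The paper computes the characteristic polynomial $\det(\Lambda-\lambda I)$ directly via elementary row and column operations: adding all columns to the first extracts the factor $1+(n-1)r-\lambda$, and then subtracting the first row from the others leaves an upper-triangular determinant giving $(1-r-\lambda)^{n-1}$. Your approach instead writes $\Lambda=(1-r)I_n+rJ_n$ and reads off the spectrum from that of the all-ones matrix $J_n$. Your argument is more conceptual and has the advantage of delivering the eigenvectors and multiplicities for free (the span of $\bm{1}$ and its orthogonal complement), which is indeed useful downstream; the paper's computation is more hands-on but entirely self-contained, requiring no prior knowledge of the spectrum of $J_n$. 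Both are standard and equally valid here.
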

\begin{proof}
The characteristic polynomial is
\begin{alignat*}{1}
\det\left(\Lambda-\lambda I\right) & =\det\left(\begin{array}{cccc}
1-\lambda & r & \cdots & r\\
r & 1-\lambda & \cdots & r\\
\vdots & \vdots & \ddots & \vdots\\
r & r & \cdots & 1-\lambda
\end{array}\right)\\
 & =\det\left(\begin{array}{cccc}
h-\lambda & r & \cdots & r\\
h-\lambda & 1-\lambda & \cdots & r\\
\vdots & \vdots & \ddots & \vdots\\
h-\lambda & r & \cdots & 1-\lambda
\end{array}\right),\hspace{1em}\begin{cases}
C_{1}\rightarrow C_{1}+\dots+C_{n}\\
h=1+\left(n-1\right)r
\end{cases}\\
 & =\left(h-\lambda\right)\det\left(\begin{array}{cccc}
1 & r & \cdots & r\\
1 & 1-\lambda & \cdots & r\\
\vdots & \vdots & \ddots & \vdots\\
1 & r & \cdots & 1-\lambda
\end{array}\right)\\
 & =\left(h-\lambda\right)\det\left(\begin{array}{cccc}
1 & r & \cdots & r\\
0 & 1-r-\lambda & \cdots & 0\\
\vdots & \vdots & \ddots & \vdots\\
0 & 0 & \cdots & 1-r-\lambda
\end{array}\right),\hspace{1em}\begin{cases}
R_{i}\rightarrow R_{i}-R_{1}\\
i\neq1
\end{cases}\\
 & =\left(h-\lambda\right)\left(1-r-\lambda\right)^{n-1}.
\end{alignat*}
The distinct (except for $r=0$) eigenvalues are therefore $1-r$
and $h=1+\left(n-1\right)r$.
\end{proof}

\subsection{Eigenvalues of the Gram matrix}

First we would like to calculate the eigenvalues of $\Gamma\left(N,k\right)$,
the Gram matrix of the states of $S_{N,k}$. We begin by finding the
eigenvalues of $\Gamma\left(N,1\right)$, which is a real $N\times N$
matrix 

\begin{alignat}{1}
\Gamma\left(N,1\right) & =\left(\begin{array}{cccc}
1 & s & \cdots & s\\
s & 1 & \cdots & s\\
\vdots & \vdots & \ddots & \vdots\\
s & s & \cdots & 1
\end{array}\right),\hspace{1em}s\in\left(-\frac{1}{N-1},1\right).\label{Gamma(n,1)}
\end{alignat}
The eigenvalues of $\Gamma\left(N,1\right)$ are immediately obtained
by applying Lemma \ref{eigvalues-Lambda}. 
\begin{lem}
\label{eigvalues-Gamma(N,1)} The distinct (except when $s=0$) eigenvalues
of $\Gamma\left(N,1\right)$ are $1-s$ and $1+\left(N-1\right)s$,
where $s\in\left(-\frac{1}{N-1},1\right)$. 
\end{lem}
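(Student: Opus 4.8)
The plan is essentially a one-line application of Lemma \ref{eigvalues-Lambda}: the matrix $\Gamma(N,1)$ displayed in \eqref{Gamma(n,1)} is precisely a matrix of the form \eqref{Lambda} with $n=N$ and $r=s$, so its distinct eigenvalues are immediately $1-s$ and $1+(N-1)s$. First I would observe that $\Gamma(N,1)$ is the Gram matrix of the parent set $S_N$, whose diagonal entries are $\langle\psi_i\vert\psi_i\rangle = 1$ (the states are unit vectors) and whose off-diagonal entries are $\langle\psi_i\vert\psi_j\rangle = s$ for $i\neq j$ by hypothesis. Hence $\Gamma(N,1) = \Lambda$ with $r=s$, $n=N$.

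Then I would simply quote Lemma \ref{eigvalues-Lambda} to conclude that the eigenvalues are $1-s$ (with multiplicity $N-1$) and $1+(N-1)s$ (with multiplicity $1$), and that these two values are distinct whenever $s\neq 0$. For completeness one might note the consistency check with linear independence: since $s\in\left(-\tfrac{1}{N-1},1\right)$, both $1-s>0$ and $1+(N-1)s>0$, so $\Gamma(N,1)$ is positive definite, confirming that $S_N$ is indeed a linearly independent set as assumed.

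There is essentially no obstacle here — this lemma is a trivial corollary of Lemma \ref{eigvalues-Lambda}, recorded separately only because $\Gamma(N,1)$ is the base case for the inductive/tensor-power computation of the eigenvalues of the full Gram matrix $\Gamma(N,k)$ in the next step of the argument. The only thing worth stating carefully is the multiplicity bookkeeping, since that will matter when one takes $k$-fold tensor products of the eigenspaces to build the spectrum of $\Gamma(N,k)$.

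\begin{proof}
The states in $S_N$ are unit vectors, so the diagonal entries of $\Gamma(N,1)$ equal $1$, and by hypothesis $\langle\psi_i\vert\psi_j\rangle = s$ for $i\neq j$. Thus $\Gamma(N,1)$ is exactly a matrix of the form \eqref{Lambda} with $n=N$ and $r=s$. By Lemma \ref{eigvalues-Lambda}, its distinct (for $s\neq 0$) eigenvalues are $1-s$, with multiplicity $N-1$, and $1+(N-1)s$, with multiplicity $1$. For $s\in\left(-\frac{1}{N-1},1\right)$ both are strictly positive, consistent with the linear independence of $S_N$.
\end{proof}
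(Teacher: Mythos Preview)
Your proof is correct and is exactly the approach the paper takes: it simply notes that $\Gamma(N,1)$ is the matrix $\Lambda$ of Lemma~\ref{eigvalues-Lambda} with $n=N$ and $r=s$, and reads off the eigenvalues. Your additional remarks on multiplicities and positivity are accurate but go slightly beyond what the paper records here.
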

The proof follows from Lemma \ref{eigvalues-Lambda}.

We will use Lemmas \ref{block-lemma} and \ref{eigvalues-Lambda}
to calculate the eigenvalues of $\Gamma\left(N,k\right)$. 
\begin{thm}
\label{eigvalues-Gamma(N,k)} The eigenvalues of $\Gamma\left(N,k\right)$
are of the form $\left(1-s\right)^{a}\left[1+\left(N-1\right)s\right]^{b}$
for non-negative integers $a$ and $b$ satisfying $a+b=k$. 
\end{thm}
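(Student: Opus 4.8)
The plan is to express $\Gamma(N,k)$ as a $k$-fold tensor (Kronecker) power of $\Gamma(N,1)$ and then read off its spectrum. The key observation is that the Gram matrix of the product states $\left|\psi_{\sigma}\right\rangle=\left|\psi_{\sigma(1)}\right\rangle\otimes\cdots\otimes\left|\psi_{\sigma(k)}\right\rangle$ factorizes: since
\begin{alignat*}{1}
\left\langle\psi_{\sigma}\vert\psi_{\tau}\right\rangle & =\prod_{j=1}^{k}\left\langle\psi_{\sigma(j)}\vert\psi_{\tau(j)}\right\rangle,
\end{alignat*}
under a suitable (lexicographic) ordering of the index functions $\sigma\in\mathscr{F}(k,N)$ one has exactly $\Gamma(N,k)=\Gamma(N,1)^{\otimes k}$. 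First I would set up this ordering carefully and verify the factorization identity entry by entry, so that the tensor-power statement is on firm footing.

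With the tensor-power identity in hand, the spectrum follows from the standard fact that the eigenvalues of a Kronecker product are the products of the eigenvalues of the factors: if $\Gamma(N,1)$ has eigenvalues $\mu_{1},\dots,\mu_{N}$ (with multiplicity), then $\Gamma(N,1)^{\otimes k}$ has eigenvalues $\mu_{i_{1}}\mu_{i_{2}}\cdots\mu_{i_{k}}$ over all tuples $(i_{1},\dots,i_{k})\in[N]^{k}$. By Lemma \ref{eigvalues-Gamma(N,1)}, the only distinct eigenvalues of $\Gamma(N,1)$ are $1-s$ (with multiplicity $N-1$) and $1+(N-1)s$ (with multiplicity $1$). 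Hence every eigenvalue of $\Gamma(N,k)$ is a product of $k$ factors, each equal to either $1-s$ or $1+(N-1)s$; grouping them, such a product has the form $(1-s)^{a}\left[1+(N-1)s\right]^{b}$ with $a,b$ non-negative integers and $a+b=k$, which is the claim.

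The alternative, in the spirit of the technical lemmas already proved, is to argue inductively on $k$ using Lemmas \ref{block-lemma} and \ref{eigvalues-Lambda} directly. Writing $\Gamma(N,k)$ as an $N\times N$ array of $N^{k-1}\times N^{k-1}$ blocks, the diagonal blocks are copies of $\Gamma(N,k-1)$ and the off-diagonal blocks are $s\,\Gamma(N,k-1)$; after simultaneously diagonalizing $\Gamma(N,k-1)$ in every block (it is symmetric, so one orthogonal conjugation does all blocks at once), each block becomes diagonal, Lemma \ref{block-lemma} reshuffles the result into block-diagonal form, and each resulting block is $\lambda\,\Lambda$ with $\Lambda$ as in \eqref{Lambda} with $r=s$ and $\lambda$ an eigenvalue of $\Gamma(N,k-1)$. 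Applying Lemma \ref{eigvalues-Lambda} multiplies each old eigenvalue $\lambda$ by either $1-s$ or $1+(N-1)s$, and the induction closes. I expect the main obstacle to be purely bookkeeping: getting the index ordering and the block structure exactly right so that the tensor-power (or block-recursion) identity holds on the nose rather than merely up to an unspecified permutation — once that is pinned down, the spectral conclusion is immediate. Multiplicities are not needed for the statement as phrased, so I would not track them beyond what the argument naturally provides.
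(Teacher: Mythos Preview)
Your proposal is correct. In fact, your \emph{alternative} route is exactly the paper's proof: the paper states and proves the block recursion $\Gamma(N,l+1)$ as an $N\times N$ array of blocks $\Gamma(N,l)$ and $s\,\Gamma(N,l)$ (its Lemma~\ref{Gamma n, k+1}), conjugates by a block-diagonal $R=\mathrm{diag}(G,\dots,G)$ with $G$ diagonalizing $\Gamma(N,l)$, invokes Lemma~\ref{block-lemma} to reshuffle into block-diagonal form with blocks $\alpha_i\,\Gamma(N,1)$, and closes the induction via Lemma~\ref{eigvalues-Lambda}. So on that branch you have reproduced the paper's argument essentially verbatim.

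Your \emph{primary} route, by contrast, is a genuine shortcut the paper does not take. Observing directly that $\Gamma(N,k)=\Gamma(N,1)^{\otimes k}$ (which is exactly what the block recursion in Lemma~\ref{Gamma n, k+1} encodes, iterated) and then quoting the standard spectral fact for Kronecker products bypasses Lemma~\ref{block-lemma} and the inductive bookkeeping entirely; it also yields the multiplicities $\binom{k}{a}(N-1)^{a}$ for free, which the paper's statement does not record. The paper's inductive approach has the virtue of being self-contained given the two technical lemmas already proved, whereas your tensor-power argument trades those for one external (but elementary) fact about $\otimes$. Your worry about the ``bookkeeping obstacle'' is overstated: any two orderings of $\mathscr{F}(k,N)$ differ by a permutation similarity, which leaves the spectrum invariant, so the identity $\Gamma(N,k)=\Gamma(N,1)^{\otimes k}$ need only hold up to such a permutation for the conclusion to follow.
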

\begin{proof}
We will prove the theorem by induction on $k$. For the proof, we
will need the following result that shows the connection between $\Gamma\left(N,l+1\right)$
and $\Gamma\left(N,l\right)$, where $l\in\mathbb{N}$. 
\begin{lem}
\label{Gamma n, k+1} The Gram matrix of the states of $S_{N,l+1}$
is given by 
\begin{alignat}{1}
\Gamma\left(N,l+1\right) & =\left[\begin{array}{cccc}
\Gamma\left(N,l\right) & s\Gamma\left(N,l\right) & \cdots & s\Gamma\left(N,l\right)\\
s\Gamma\left(N,l\right) & \Gamma\left(N,l\right) & \cdots & s\Gamma\left(N,l\right)\\
\vdots & \vdots & \ddots & \vdots\\
s\Gamma\left(N,l\right) & s\Gamma\left(N,l\right) & \cdots & \Gamma\left(N,l\right)
\end{array}\right]\label{Gamma(n,k+1)}
\end{alignat}
\end{lem}
\begin{proof}
For ease of understanding, denote the elements of $S_{N,l}$ by $\left|\phi_{i}\right\rangle $,
where $i=1,\dots,N^{l}$. Then the elements of $S_{N,l+1}$ are of
the form $\left|\phi_{i}\right\rangle \otimes\left|\psi_{j}\right\rangle $
for $i=1,\dots,N^{l}$ and $j=1,\dots,N$. Then it holds that $\Gamma\left(N,l+1\right)$
must be of the form given by \eqref{Gamma(n,k+1)}, where the $\left(x,y\right)$-th
entry of the $\left(i,j\right)$-th block is the inner product between
$\left|\phi_{x}\right\rangle \otimes\left|\psi_{i}\right\rangle $
and $\left|\phi_{y}\right\rangle \otimes\left|\psi_{j}\right\rangle $.
\end{proof}
From Lemma \ref{eigvalues-Gamma(N,1)}, we know the result holds for
$k=1$. Now assume the result is true for $k=l$. 

Let $G$ be a matrix such that $G\Gamma\left(N,l\right)G^{-1}$ is
diagonal of the form $\alpha=\left(\begin{array}{ccc}
\alpha_{1}\\
 & \ddots\\
 &  & \alpha_{m}
\end{array}\right)$, where $m=N^{l}$ and let $R=\left(\begin{array}{ccc}
G\\
 & \ddots\\
 &  & G
\end{array}\right)$, where the number of $G$ matrices along its diagonal is $N$. Then
\begin{alignat*}{1}
R\Gamma\left(N,l+1\right)R^{-1} & =\left[\begin{array}{cccc}
G\Gamma\left(N,l\right)G^{-1} & sG\Gamma\left(N,l\right)G^{-1} & \cdots & sG\Gamma\left(N,l\right)G^{-1}\\
sG\Gamma\left(N,l\right)G^{-1} & G\Gamma\left(N,l\right)G^{-1} & \cdots & sG\Gamma\left(N,l\right)G^{-1}\\
\vdots & \vdots & \ddots & \vdots\\
sG\Gamma\left(N,l\right)G^{-1} & sG\Gamma\left(N,l\right)G^{-1} & \cdots & G\Gamma\left(N,l\right)G^{-1}
\end{array}\right]\\
 & =\left(\begin{array}{cccc}
\alpha & s\alpha & \cdots & s\alpha\\
s\alpha & \alpha & \cdots & s\alpha\\
\vdots & \vdots & \ddots & \vdots\\
s\alpha & s\alpha & \cdots & \alpha
\end{array}\right),
\end{alignat*}
where we have used $\alpha=G\Gamma\left(N,l\right)G^{-1}$. 

By Lemma \ref{block-lemma}, $\Gamma(n,l+1)$ is therefore similar
to a block-diagonal matrix $\left(\begin{array}{ccc}
D_{1}\\
 & \ddots\\
 &  & D_{m}
\end{array}\right),$ where
\begin{alignat*}{1}
D_{i} & =\left(\begin{array}{cccc}
\alpha_{i} & s\alpha_{i} & \cdots & s\alpha_{i}\\
s\alpha_{i} & \alpha_{i} & \cdots & s\alpha_{i}\\
\vdots & \vdots & \ddots & \vdots\\
s\alpha_{i} & s\alpha_{i} & \cdots & \alpha_{i}
\end{array}\right)=\alpha_{i}\Gamma\left(N,1\right).
\end{alignat*}
Now, by the induction hypothesis,
\begin{alignat*}{1}
\alpha_{i} & =\left(1-s\right)^{a}\left[1+\left(N-1\right)s\right]^{b}
\end{alignat*}
for non-negative integers $a$ and $b$ satisfying $a+b=l$. Thus
the eigenvalues of $D_{i}$ are $\left(1-s\right)^{a+1}\left[1+\left(N-1\right)s\right]^{b}$
or $\left(1-s\right)^{a}\left[1+\left(N-1\right)s\right]^{b+1}$.
Therefore, the result holds for $k=l+1$, proving the theorem.
\end{proof}

\subsection{A feasible solution for the primal problem}

Having found the eigenvalues of $\Gamma\left(N,k\right)$, we will
now guess an ansatz for $\bm{p}$ (equivalently $P$) and then show
that $\Gamma\left(N,k\right)-P\left(N,k\right)$ is positive semidefinite. 
\begin{thm}
\label{Gamma-P} $\Gamma\left(N,k\right)-P\left(N,k\right)$ is positive
semidefinite, where 
\begin{alignat*}{1}
P\left(N,k\right) & =\begin{cases}
\left(1-s\right)^{k}I, & s\in\left[0,1\right)\\
\left[1+\left(N-1\right)s\right]^{k}I, & s\in\left(-\frac{1}{N-1},0\right]
\end{cases}
\end{alignat*}
and $I$ is the $N^{k}\times N^{k}$ identity matrix. 
\end{thm}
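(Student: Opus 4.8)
The plan is to reduce the positive semidefiniteness of $\Gamma(N,k)-P(N,k)$ to checking that every eigenvalue of $\Gamma(N,k)$ is at least the scalar $c$ appearing in $P(N,k)=cI$. Since $P(N,k)$ is a scalar multiple of the identity, we have $\Gamma(N,k)-P(N,k)\succeq 0$ if and only if $\lambda_{\min}\bigl(\Gamma(N,k)\bigr)\geqslant c$, where $c=(1-s)^k$ for $s\in[0,1)$ and $c=[1+(N-1)s]^k$ for $s\in(-\tfrac{1}{N-1},0]$. So the whole theorem rests on identifying the smallest eigenvalue of the Gram matrix, and Theorem \ref{eigvalues-Gamma(N,k)} has already done the heavy lifting: every eigenvalue of $\Gamma(N,k)$ has the form $(1-s)^a[1+(N-1)s]^b$ with $a,b\geqslant 0$ and $a+b=k$.

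The core of the argument is then a monotonicity/comparison claim: among all products $(1-s)^a[1+(N-1)s]^b$ with $a+b=k$, the minimum is attained at one of the two extreme allocations $(a,b)=(k,0)$ or $(a,b)=(0,k)$, and which one depends on the sign of $s$. I would first record that, by the earlier lemma on linear independence, $s\in(-\tfrac{1}{N-1},1)$ guarantees both factors $1-s$ and $1+(N-1)s$ are strictly positive, so every eigenvalue is positive (consistent with $\Gamma$ being a Gram matrix of linearly independent states). Next, I would compare a generic eigenvalue with a neighbouring one obtained by shifting one unit from $b$ to $a$: the ratio of $(1-s)^{a+1}[1+(N-1)s]^{b-1}$ to $(1-s)^a[1+(N-1)s]^b$ is $(1-s)/[1+(N-1)s]$. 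For $s\in[0,1)$ this ratio is $\leqslant 1$, so decreasing $b$ (increasing $a$) never increases the eigenvalue, and the minimum over the simplex $a+b=k$ is reached at $a=k,b=0$, giving $(1-s)^k$. For $s\in(-\tfrac{1}{N-1},0]$ the ratio is $\geqslant 1$, so the opposite extreme $a=0,b=k$ wins, giving $[1+(N-1)s]^k$. In either case $\lambda_{\min}\bigl(\Gamma(N,k)\bigr)=c$, hence $\Gamma(N,k)-cI\succeq 0$.

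I would write this up as: (i) invoke Theorem \ref{eigvalues-Gamma(N,k)} to enumerate the eigenvalues; (ii) observe positivity of $1-s$ and $1+(N-1)s$ on the allowed range of $s$; (iii) carry out the one-line ratio comparison to pin down which extreme allocation minimises the product, splitting into the cases $s\geqslant 0$ and $s\leqslant 0$; (iv) conclude $\lambda_{\min}=c$ and therefore $\Gamma(N,k)-P(N,k)=\Gamma(N,k)-cI\succeq 0$ because subtracting $\lambda_{\min}I$ from a symmetric (positive semidefinite) matrix leaves a positive semidefinite matrix. One small point to handle cleanly is the boundary $s=0$, where $\Gamma(N,k)=I$ and both formulas for $P(N,k)$ collapse to $I$, so the inequality is the trivial $0\succeq 0$; and the case $k=1$, where the claim is just Lemma \ref{eigvalues-Gamma(N,1)} together with the same sign analysis.

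Honestly, there is no serious obstacle here once Theorem \ref{eigvalues-Gamma(N,k)} is in hand — the only thing requiring care is the direction of the inequality in the ratio comparison and making sure the case split on the sign of $s$ matches the piecewise definition of $P(N,k)$. The potential pitfall is an off-by-one or sign slip in claiming $(1-s)^k$ is the \emph{smallest} rather than the largest eigenvalue; the sanity check is that for $s\in[0,1)$ we have $1-s\leqslant 1\leqslant 1+(N-1)s$, so raising the power of the smaller base $1-s$ as much as possible (i.e. $a=k$) indeed yields the smallest product, confirming the formula.
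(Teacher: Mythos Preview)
Your proposal is correct and follows essentially the same route as the paper: reduce to $\lambda_{\min}(\Gamma(N,k))\geqslant c$ via Theorem~\ref{eigvalues-Gamma(N,k)}, then compare the bases $1-s$ and $1+(N-1)s$ according to the sign of $s$. The paper bounds each eigenvalue directly by replacing the larger base with the smaller (e.g. $[1+(N-1)s]^b\geqslant(1-s)^b$ for $s\geqslant 0$), whereas you phrase it as a one-step ratio/monotonicity argument, but this is a cosmetic difference rather than a distinct method.
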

\begin{proof}
First note that $\Gamma\left(N,k\right)$ and $P\left(N,k\right)$
are diagonalizable in the same basis as $P\left(N,k\right)$ is a
scalar multiple of the identity. To show that $\left[\Gamma\left(N,k\right)-P\left(N,k\right)\right]$
is positive semidefinite, we use the fact that when two positive semidefinite
matrices $M_{1}$ and $M_{2}$ are diagonalizable in the same basis,
then $M_{1}-M_{2}$ is positive semidefinite if the smallest eigenvalue
of $M_{1}$ is greater than or equal to the largest eigenvalue of
$M_{2}$. 

First consider the case $s\in\left[0,1\right)$. We see that
\begin{alignat*}{1}
\left(1-s\right)^{a}\left[1+\left(N-1\right)s\right]^{b} & \geqslant\left(1-s\right)^{a}\left(1-s\right)^{b}=\left(1-s\right)^{k}
\end{alignat*}
since $\left[1+\left(N-1\right)s\right]^{b}\geqslant\left(1-s\right)^{b}$
as $N\geqslant2$ and $a+b=k$. 

Now consider $s\in\left(-\frac{1}{N-1},0\right]$. Here we have
\begin{alignat*}{1}
\left(1-s\right)^{a}\left[1+\left(N-1\right)s\right]^{b} & \geqslant\left[1+\left(N-1\right)s\right]^{a}\left[1+\left(N-1\right)s\right]^{b}\\
 & =\left[1+\left(N-1\right)s\right]^{k}
\end{alignat*}
since $\left[1+\left(N-1\right)s\right]^{b}\leqslant1$ and $\left(1-s\right)^{a}\geqslant1$
for $s\in\left(-\frac{1}{N-1},0\right]$. 

Therefore, $\Gamma\left(N,k\right)-P\left(N,k\right)$ is positive
semidefinite for $s\in\left(-\frac{1}{N-1},1\right)$. 
\end{proof}
Theorem \ref{Gamma-P} shows that the ansatz
\begin{alignat}{1}
P\left(N,k\right) & =\begin{cases}
\left(1-s\right)^{k}I, & s\in\left[0,1\right)\\
\left[1+\left(N-1\right)s\right]^{k}I, & s\in\left(-\frac{1}{N-1},0\right]
\end{cases}\label{ansatz-P}
\end{alignat}
 will work if there exist positive semidefinite $Z$ and a vector
$\bm{\bm{\bm{\bm{z}}}\succeq0}$ such that
\begin{alignat}{1}
\text{tr}\left[\Gamma\left(N,k\right)Z\right] & =\begin{cases}
\left(1-s\right)^{k} & s\in\left[0,1\right)\\
\left[1+\left(N-1\right)s\right]^{k} & s\in\left(-\frac{1}{N-1},0\right]
\end{cases}\label{Tr(Gamma*Z)}
\end{alignat}
and $z_{i}+\eta_{i}+\text{tr}\left(F_{i}Z\right)=0$ for all $i=1,\dots,N^{k}$. 

\subsection{Optimal solution}

First we show that one can indeed find a suitable $Z$ satisfying
\eqref{Tr(Gamma*Z)}. 
\begin{thm}
\label{main-theorem} For any choice of $N,k\in\mathbb{N}$ there
exists an $N^{k}\times N^{k}$ positive semidefinite matrix $Z\left(N,k\right)$
with diagonal entries $1/N^{k}$ such that
\begin{alignat}{1}
\text{tr}\left[\Gamma\left(N,k\right)Z\left(N,k\right)\right] & =\begin{cases}
\left(1-s\right)^{k}, & s\in\left[0,1\right)\\
\left[1+\left(N-1\right)s\right]^{k}, & s\in\left(-\frac{1}{N-1},0\right].
\end{cases}\label{trace-equality}
\end{alignat}
\end{thm}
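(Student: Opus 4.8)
The plan is to build $Z(N,k)$ recursively by tensor products, exploiting the same block structure that we used for the Gram matrix. Recall that $\Gamma(N,l+1)$ is built from $\Gamma(N,l)$ by the pattern in \eqref{Gamma(n,k+1)}, which is exactly $\Lambda\otimes\Gamma(N,l)$ with $\Lambda=\Gamma(N,1)$ (up to the ordering of tensor factors fixed in Lemma \ref{Gamma n, k+1}). So a natural ansatz is $Z(N,k)=Z(N,1)^{\otimes k}$, where $Z(N,1)$ is a single $N\times N$ positive semidefinite matrix with diagonal entries $1/N$ and $\operatorname{tr}[\Gamma(N,1)Z(N,1)]=p$, with $p$ the single-copy optimal value from Lemma \ref{p=00003D1-s}. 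If such a $Z(N,1)$ exists, then positive semidefiniteness and the diagonal condition are inherited by tensor powers (diagonal of a tensor product is the product of diagonals, so all diagonal entries become $1/N^k$), and $\operatorname{tr}[\Gamma(N,k)Z(N,k)]=\bigl(\operatorname{tr}[\Gamma(N,1)Z(N,1)]\bigr)^{k}=p^{k}$, which is precisely the right-hand side of \eqref{trace-equality}. Thus the whole problem reduces to the single-copy case $k=1$.

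For $k=1$ I would construct $Z(N,1)$ explicitly as a matrix of the same symmetric form $Z(N,1)=\tfrac1N\Lambda(r)$ for a scalar $r$ to be determined, i.e. $1/N$ on the diagonal and $r/N$ off-diagonal. Then $\operatorname{tr}[\Gamma(N,1)Z(N,1)]=\tfrac1N\operatorname{tr}[\Lambda(s)\Lambda(r)]$, which is a quadratic-free linear computation: the diagonal of $\Lambda(s)\Lambda(r)$ has constant entries $1+(N-1)sr$, so the trace is $1+(N-1)sr$ and $\operatorname{tr}[\Gamma(N,1)Z(N,1)]=1+(N-1)sr$ (here I am using the normalization $\eta_i=1/N$, so $\operatorname{tr}[\Gamma Z]$ should match $p$ directly after accounting for the $1/N$ in $Z$; I would double-check the exact bookkeeping against the SDP in Sec.~\ref{III}). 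Setting this equal to $p$ gives, for $s\in[0,1)$, the requirement $1+(N-1)sr=1-s$, i.e. $r=-\tfrac{1}{N-1}$; and for $s\in(-\tfrac1{N-1},0]$, $1+(N-1)sr=1+(N-1)s$, i.e. $r=1$. In the latter case $Z(N,1)=\tfrac1N\Lambda(1)=\tfrac1N J$ (the all-ones matrix), which is manifestly positive semidefinite. In the former case $r=-\tfrac1{N-1}$, and by Lemma \ref{eigvalues-Lambda} the eigenvalues of $\Lambda(-\tfrac1{N-1})$ are $1-r=\tfrac{N}{N-1}>0$ and $1+(N-1)r=0\ge 0$, so $Z(N,1)$ is positive semidefinite as well. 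Hence in both regimes a valid $Z(N,1)$ exists.

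The remaining step is to verify that the tensor-power construction really does reproduce $\Gamma(N,k)$'s block pattern with the correct indexing. The only subtlety is the ordering of tensor factors: Lemma \ref{Gamma n, k+1} writes $\Gamma(N,l+1)$ with $\Gamma(N,l)$ as the inner blocks, which corresponds to $\Gamma(N,1)\otimes\Gamma(N,l)$ in one convention. Since $\operatorname{tr}[(A_1\otimes A_2)(B_1\otimes B_2)]=\operatorname{tr}[A_1B_1]\operatorname{tr}[A_2B_2]$ regardless of factor order, the trace identity $\operatorname{tr}[\Gamma(N,k)Z(N,k)]=p^{k}$ holds for either convention, so I would simply state $Z(N,k)=Z(N,1)^{\otimes k}$, note that both $\Gamma$ and $Z$ are tensor powers, invoke multiplicativity of the trace under tensor products, and conclude. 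I would also record that the diagonal entries of $Z(N,1)^{\otimes k}$ are $(1/N)^k=1/N^k$ as claimed in the statement.

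I do not expect a serious obstacle here: the main result is essentially a bookkeeping exercise once one recognizes the tensor-product structure. The one place to be careful is matching normalizations between the matrix $Z$ in the dual SDP (where the diagonal entries are fixed to $1/N^k$, i.e. to $\eta_i$) and the scalar $p$ from Lemma \ref{p=00003D1-s}; getting the factors of $N$ right in $\operatorname{tr}[\Gamma Z]$ is the only thing that could trip up the argument, and it is routine.
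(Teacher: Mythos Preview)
Your construction is correct and in substance identical to the paper's: the matrices $Z(N,1)$ you write down for the two regimes of $s$ are exactly the paper's base cases, and your tensor power $Z(N,k)=Z(N,1)^{\otimes k}$ is precisely what the paper's recursive block definition $Z(N,l+1)=\tfrac{1}{N}\bigl(\Lambda(r)\otimes Z(N,l)\bigr)$ produces when unfolded. The difference is purely in presentation. The paper carries out an explicit induction on $k$, writing $Z(N,l+1)$ as a block matrix, verifying the trace equality by block multiplication against the block form of $\Gamma(N,l+1)$ from Lemma~\ref{Gamma n, k+1}, and checking positive semidefiniteness by computing the eigenvalues via Theorem~\ref{eigvalues-Gamma(N,k)} (applied at the relevant value of $r$). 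You instead invoke the tensor-product identities $\operatorname{tr}[(A\otimes B)(C\otimes D)]=\operatorname{tr}[AC]\,\operatorname{tr}[BD]$ and ``tensor of PSD is PSD'' once and for all, which collapses the induction into a single line and avoids the eigenvalue computation entirely. Your route is shorter and arguably more transparent about why the answer factorizes; the paper's route has the minor advantage of being self-contained with respect to the block-matrix machinery already set up in Lemmas~\ref{block-lemma}--\ref{Gamma n, k+1}. Your caution about factor ordering is well placed but, as you note, irrelevant for the trace, and your bookkeeping on the $1/N$ normalization is correct.
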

\begin{proof}
To prove the theorem we proceed by induction on $k$. 

First consider the case $s\in\left[0,1\right)$. For $k=1$ let 
\begin{alignat*}{1}
Z\left(N,1\right) & =\frac{1}{N}\left(\begin{array}{cccc}
1 & -\frac{1}{\left(N-1\right)} & \cdots & -\frac{1}{\left(N-1\right)}\\
-\frac{1}{\left(N-1\right)} & 1 & \cdots & -\frac{1}{\left(N-1\right)}\\
\vdots & \vdots & \ddots & \vdots\\
-\frac{1}{\left(N-1\right)} & -\frac{1}{\left(N-1\right)} & \cdots & 1
\end{array}\right).
\end{alignat*}
Then by Lemma \ref{eigvalues-Lambda} the eigenvalues of $Z\left(N,1\right)$
are $\frac{1}{N-1}$ and $0$. Hence it is positive semidefinite. 

With the above choice of $Z\left(N,1\right)$ we have 
\begin{alignat*}{1}
\text{tr}\left[\Gamma\left(N,1\right)Z\left(N,1\right)\right] & =\frac{1}{N}\text{tr}\left[\left(\begin{array}{cccc}
1 & s & \cdots & s\\
s & 1 & \cdots & s\\
\vdots & \vdots & \cdots & \vdots\\
s & s & \cdots & 1
\end{array}\right)\left(\begin{array}{cccc}
1 & -\frac{1}{\left(N-1\right)} & \cdots & -\frac{1}{\left(N-1\right)}\\
-\frac{1}{\left(N-1\right)} & 1 & \cdots & -\frac{1}{\left(N-1\right)}\\
\vdots & \vdots & \ddots & \vdots\\
-\frac{1}{\left(N-1\right)} & -\frac{1}{\left(N-1\right)} & \cdots & 1
\end{array}\right)\right]\\
 & =1-s.
\end{alignat*}

Now assume the result holds for $k=l$, i.e., $\text{tr}\left[\Gamma\left(N,l\right)Z\left(N,l\right)\right]=\left(1-s\right)^{l}$
and $Z\left(N,l\right)$ is positive semidefinite. First we will show
that the trace equality holds for $k=l+1$ for $l\geqslant1$. Define
\begin{alignat*}{1}
Z\left(N,l+1\right) & =\frac{1}{N}\left(\begin{array}{cccc}
Z\left(N,l\right) & -\frac{Z\left(N,l\right)}{\left(N-1\right)} & \cdots & -\frac{Z\left(N,l\right)}{\left(N-1\right)}\\
-\frac{Z\left(N,l\right)}{\left(N-1\right)} & Z\left(N,l\right) & \cdots & -\frac{Z\left(N,l\right)}{\left(N-1\right)}\\
\vdots & \vdots & \ddots & \vdots\\
-\frac{Z\left(N,l\right)}{\left(N-1\right)} & -\frac{Z\left(N,l\right)}{\left(N-1\right)} & \cdots & Z\left(N,l\right)
\end{array}\right),\hspace{1em}l\geqslant1.
\end{alignat*}
 Then 
\begin{alignat*}{1}
\text{tr}\left[\Gamma\left(N,l+1\right)Z\left(N,l+1\right)\right] & =\text{tr}\left[\Gamma\left(N,l\right)Z\left(N,l\right)\right]-s\text{tr}\left[\Gamma\left(N,l\right)Z\left(N,l\right)\right]\\
 & =\left(1-s\right)\text{tr}\left[\Gamma\left(N,l\right)Z\left(N,l\right)\right]\\
 & =\left(1-s\right)^{l+1},
\end{alignat*}
which proves the equality holds for $k=l+1$. 

What remains to be shown is that $Z\left(N,l+1\right)$ is positive
semidefinite. The eigenvalues of $Z\left(N,l+1\right)$ are obtained
by applying Theorem \ref{eigvalues-Gamma(N,k)} with $s=-\frac{1}{N-1}$.
The eigenvalues are either $0$ or $\frac{1}{N}\left(\frac{N}{N-1}\right)^{l+1}$,
where the nonzero eigenvalue is obtained for $a=l+1$ and $b=0$.
Therefore, $Z\left(N,k\right)$ is positive semidefinite for $k=l+1$.
Since we have already shown $Z\left(N,1\right)$ is positive semidefinite,
$Z\left(N,k\right)$ is positive semidefinite for all $k\geqslant1$. 

Now consider $s\in\left(-\frac{1}{N-1},0\right]$. For $k=1$ let
\begin{alignat*}{1}
Z\left(N,1\right) & =\frac{1}{N}\left(\begin{array}{cccc}
1 & 1 & \cdots & 1\\
1 & 1 & \cdots & 1\\
\vdots & \vdots & \ddots & \vdots\\
1 & 1 & \cdots & 1
\end{array}\right).
\end{alignat*}
Once again, by applying Lemma \ref{eigvalues-Gamma(N,1)} , we find
that the eigenvalues of $Z\left(N,1\right)$ are $1$ and $0$. Hence
it is positive semidefinite. 

Now with the above choice of $Z\left(N,1\right)$, 
\begin{alignat*}{1}
\text{tr}\left[\Gamma\left(N,1\right)Z\left(N,1\right)\right] & =\frac{1}{N}\text{tr}\left[\left(\begin{array}{cccc}
1 & s & \cdots & s\\
s & 1 & \cdots & s\\
\vdots & \vdots & \cdots & \vdots\\
s & s & \cdots & 1
\end{array}\right)\left(\begin{array}{cccc}
1 & 1 & \cdots & 1\\
1 & 1 & \cdots & 1\\
\vdots & \vdots & \ddots & \vdots\\
1 & 1 & \cdots & 1
\end{array}\right)\right]\\
 & =1+\left(N-1\right)s.
\end{alignat*}
As before, we assume the result holds for $k=l$, i.e. $\text{tr}\left[\Gamma\left(N,l\right)Z\left(N,l\right)\right]=\left[1+\left(N-1\right)s\right]^{l}$
and $Z\left(N,l\right)$ is positive semidefinite. We first show that
the trace equality holds for $k=l+1$ for $l\geqslant1$. Define
\begin{alignat*}{1}
Z\left(N,l+1\right) & =\frac{1}{N}\left(\begin{array}{cccc}
Z\left(N,l\right) & Z\left(N,l\right) & \cdots & Z\left(N,l\right)\\
Z\left(N,l\right) & Z\left(N,l\right) & \cdots & Z\left(N,l\right)\\
\vdots & \vdots & \ddots & \vdots\\
Z\left(N,l\right) & Z\left(N,l\right) & \cdots & Z\left(N,l\right)
\end{array}\right),\hspace{1em}l\geqslant1.
\end{alignat*}
 Then 
\begin{alignat*}{1}
\text{tr}\left[\Gamma\left(N,l+1\right)Z\left(N,l+1\right)\right] & =\text{tr}\left[\Gamma\left(N,l\right)Z\left(N,l\right)\right]+\left(N-1\right)s\text{tr}\left[\Gamma\left(N,l\right)Z\left(N,l\right)\right]\\
 & =\left[1+\left(N-1\right)s\right]\text{tr}\left[\Gamma\left(N,l\right)Z\left(N,l\right)\right]\\
 & =\left[1+\left(N-1\right)s\right]^{l+1}.
\end{alignat*}
Therefore, the trace equality holds for $k=l+1$. What remains to
be shown is that $Z\left(N,l+1\right)$ is positive semidefinite.
Now note that $Z\left(N,l+1\right)$ is an $N^{l+1}\times N^{l+1}$
matrix whose all elements are $\frac{1}{N^{l+1}}$. Therefore, its
eigenvalues are $0$ and $1$; hence, $Z\left(N,l+1\right)$ is positive
semidefinite. Since we have already shown $Z\left(N,1\right)$ is
positive semidefinite, this completes the proof. 

Thus we have proved the existence of positive semidefinite $Z\left(N,k\right)$
that satisfies the trace equality \eqref{trace-equality} for all
$s\in\left(-\frac{1}{N-1},1\right)$. 
\end{proof}
Since $\eta_{i}=1/N^{k}$ for all $i$, by choosing $\bm{\bm{\bm{\bm{z}}}}=\bm{0}$
(null vector), one has $z_{i}+\eta_{i}+\text{tr}\left(F_{i}Z\right)=0$
for all $i$. This completes the proof of our main result, Theorem
\ref{main-result}. 

Let us briefly go through the key elements of the proof once again.
The proof was based on guessing an appropriate primal variable $\bm{p}$
(equivalently $P$) and showing its optimality. Since the primal problem
is convex with a nonempty feasible set, strong duality holds. We showed
that there exist feasible dual variables $Z$ and $\bm{\bm{\bm{z}}}$
such that $\text{tr}\left(\Gamma Z\right)=\bm{\eta\cdot p}$, which
is the dual objective function; hence, our guessed $\bm{p}$ is the
optimal solution.

\section{Conclusions\label{V}}

We considered the problem of unambiguously determining the state $\left|\psi_{\sigma}\right\rangle $
of an unknown quantum sequence of length $k\geqslant1$, where the
elements of the given sequence are drawn with equal probability from
a set of linearly independent pure states $S_{N}=\left\{ \left|\psi_{i}\right\rangle :2\leqslant i\leqslant N\right\} $
with real and equal inner products. This (and even the most general
one without any assumption about inner product and/or prior probabilities)
can be posed as an unambiguous state discrimination problem, where
the objective is to discriminate between the states of all such possible
sequences. 

Let $S_{N,k}=\left\{ \left|\psi_{\sigma}\right\rangle \right\} $
be the set of all possible sequences of length $k$. Let $p$ and
$p_{N,k}$ be the optimum probabilities for unambiguously discriminating
between the elements of $S_{N}$ and $S_{N,k}$, respectively. A simple
argument shows that $p_{N,k}\geqslant p^{k}$, where the lower bound
is achievable by measuring individual members of the sequence. Since
any sequence of length $k$ is a composite quantum system comprising
$k$ subsystems, one might expect the inequality, in general, to be
strict, i.e., $p_{N,k}>p^{k}$, and to achieve the optimum value,
a joint measurement is required. 

Following earlier works on unambiguous state discrimination \citep{Sugimoto+-2010,Eldar-2003},
we formulated the sequence discrimination problem as an SDP and calculated
the optimum probability by solving the optimality conditions. In particular,
we showed that $p_{N,k}=p^{k}$; thus, the optimum value is achieved
by performing measurements on the individual members of the sequence.

Several problems in this context are still left open. First is where
the inner products of the states $\left|\psi_{i}\right\rangle $ belonging
to the parent set $S_{N}$ are equal but complex. For $k=1$, this
reduces to the standard unambiguous state discrimination problem which
has been solved completely \citep{Roa+-2011}. However, we could not
solve the sequence discrimination problem in this scenario using the
same approach. 

Second is a more general scenario where $S_{N}$ is simply a set of
linearly independent pure states without any restrictions on the inner
products. In this case, we carried out thousands of numerical SDP
experiments with a limited number of parent states and very short
sequences, namely, $N=3$ and $k=2,3$, and the results (assuming
uniform prior probabilities) seemed to suggest that the optimum value,
once again, is achievable by measuring the individual members without
requiring any joint measurement. 

In our scenario, as well as in more general ones (which we could not
solve), repetitions of states are allowed in a sequence. The third
problem considers the situation where it is not. Restrict $k<N$ and
by $S'_{N,k}$ denote the set of sequences of length $k$, where no
element is repeated. If we let $\mathscr{G}(k,N)$ be the set of injective
functions from $\left[k\right]$ to $\left[N\right]$, then 
\begin{alignat*}{1}
S_{N,k}^{\prime} & =\left\{ \left|\psi_{\tau(1)}\right\rangle \otimes\cdots\otimes\left|\psi_{\tau(k)}\right\rangle :\tau\in\mathscr{G}(k,N)\right\} 
\end{alignat*}
The cardinality of this set is ${\displaystyle ^{N}P_{k}=\frac{N!}{(N-k)!}}$
(note that the injective functions from $[k]$ to $[N]$ for $k\leqslant N$
correspond to the permutation of $k$ objects chosen from $N$ objects).
Now assume that the inner products of the states $\left|\psi_{i}\right\rangle $
are equal and positive, say, $s>0$. Then $S_{N,k}^{\prime}\subset S_{N,k}$,
where $S_{N,k}$ is the set of sequences considered in this paper.
Under these restricted conditions, numerical experiments ($N=3$,
$k=3$) still suggest that the optimal probability for distinguishing
between the elements of $S_{N,k}^{\prime}$ unambiguously once again
obeys $(1-s)^{k}$. However, for arbitrary values of both $N$ and
$k$, whether one could benefit from collective measurements in this
scenario is an interesting problem to consider in the future. 

To summarize, sequence discrimination considers the problem of distinguishing
between sequences of some fixed length whose members are drawn from
a set of pure states, the parent set. In an unambiguous sequence discrimination
problem, the parent set must consist of linearly independent states;
otherwise unambiguous discrimination will not be possible. We showed
that if the elements of the parent set have the property that the
inner products are all real and equal, then optimal unambiguous sequence
discrimination does not require collective measurements and measuring
the individual members will suffice. However, whether collective measurements
would be necessary for general scenarios, without assumptions about
inner products or prior probabilities, remains open, and so far our
numerical attempts have failed to yield a counter example.

\end{document}